\def\be{\begin{equation}}
\def\ee{\end{equation}}
\newcommand{\E}{\mathbb{E}}
\newcommand{\R}{\mathbb{R}}
\newcommand{\qp}{q_{L_p}}
\newcommand{\qpp}{q_{L_{p+1}}}
\newcommand{\va}{a}
\newcommand{\vlambda}{\lambda}
\begin{document}

\title{Annealing and replica-symmetry in Deep Boltzmann Machines}
\dedication{Dedicated to Joel Lebowitz with profound admiration}

\author{Diego Alberici, Adriano Barra, Pierluigi Contucci, Emanuele Mingione}
\institute{D. Alberici, P. Contucci,  \at Dipartimento di Matematica, Universit\`a di Bologna, Italy \and
A. Barra \at Dipartimento di Matematica e Fisica Ennio De Giorgi, Universit\`a del Salento, Italy \& Istituto Nazionale di Fisica Nucleare, Sezione di Lecce, Italy \and
E. Mingione \at Dipartimento di Matematica, Universit\`a di Bologna, Italy \& EPFL, Switzerland}

\maketitle

\begin{abstract}
In this paper we study the properties of the quenched pressure of a multi-layer spin-glass model (a deep Boltzmann Machine in artificial intelligence jargon) whose pairwise interactions are allowed between spins lying in adjacent layers and not inside the same layer nor among layers at distance larger than one. We prove a theorem that bounds the quenched pressure of such a K-layer machine in terms of K Sherrington-Kirkpatrick spin glasses and use it to investigate its annealed region. The replica-symmetric approximation of the quenched pressure is identified and its relation to the annealed one is considered.\\
The paper also presents some observation on the model's architectural structure related to machine learning. Since escaping the annealed region is mandatory for a meaningful training, by squeezing such region we obtain thermodynamical constraints on the form factors. Remarkably, its optimal escape is achieved by requiring the last layer to scale sub-linearly in the network size.
\end{abstract}

\keywords{Spin glasses, Boltzmann machines, machine learning, thermodynamical constraints.}

\section{Introduction and results}
The rigorous approach to the study of the spin glass phase started with the celebrated result by Aizenman, Lebowitz and Ruelle \cite{ALR} of the annealed regime for the Sherrington and Kirkpatrick (SK) model more than three decades ago. Using a cluster expansion technique it was proved that the quenched free energy, the one describing the peculiar structure of the spin glass, and the annealed one coincide in the thermodynamic limit when the inverse temperature $\beta$ is smaller than one (high temperature regime). That paper, as a side result, proved also that on such regime the thermodynamic limit exists as a consequence of the simple annealed computation of the free energy.

In \cite{BCMT} a generalisation of the SK model was proposed and studied. The classical permutation group symmetry among the spin particles, a central feature of the mean field formulation of the spin glass model, was replaced by a weaker condition where the symmetry holds only between and within subsets of them. The total number of spins $N$ is therefore split into $K$ homogeneous sets each containing $N_1$, $N_2$, ..., $N_K$ particles with the constraints $\sum_{p=1}^{K}N_p=N$ and
$N_p/N \rightarrow \lambda_p$. For this model it was proved, using a suitable interpolation scheme {\it a la} Guerra \cite{Guerra}, that a Parisi-like bound holds for the free energy density under suitable conditions on the interactions among the spins. The conditions means, essentially, that the strength of the interaction within each homogenous set of particles has to dominate the one between different sets, i.e. the interactions have to be {\it elliptic} which implies the positivity argument that Guerra's interpolation comes endowed with (see also \cite{AMT2018}). The bound was later proved to be sharp, under the same conditions, in a beautiful paper by Panchenko \cite{PanchenkoMSK}. When instead the interaction coefficients are on the {\it hyperbolic} regime the model is beyond the classical techniques available to solve it. The case  $K=2$ is known in the litterature as bipartite spin glass model and has been studied in  \cite{AC,bipartiti}.

In this paper we focus on a special and interesting case of the latter, a {\it deep} SK model, or deep Boltzmann machine. The interest and the name come from the structure of the networks used in a class of machine learning techniques.
Namely, in deep Boltzmann machines, the couplings among neurons (the spins in the physical jargon) are symmetric and this ensures the {\em detailed balance} property: the long term relaxation of any (not-pathological) stochastic neural dynamics converges to the Gibbs distribution of a related cost-function (the Hamiltonian)\cite{Coolen}. All that has a twofold advantage: the first in machine learning, i.e. the possibility to derive explicit learning rule, as e.g. the celebrated {\em contrastive divergence} when extremizing the Kullback-Leibler cross-entropy; the second, in machine retrieval, is that we can import a set of mathematical techniques and ideas originally developed to treat the statistical mechanics of the spin glasses.\cite{MezardMontanari,BovierBook,CG,Nishimori,Amit}.

The paper is organised as follows. In section \ref{definizioni} we introduce the notations and the definitions. In section \ref{lb}, using the aforementioned techniques, we prove that the thermodynamic pressure for the considered class of models is always larger than a suitable convex combination of SK pressures each living on the $p$-th layer.
In section \ref{sec:ann} by using the theorem from the previous section we find a set of parameters where the quenched and annealed pressure coincide, identifying therefore a sufficient condition for the annealed phase to hold and, as a side result, a region where the thermodynamic limit exists. Since such region depends both on the temperature and on the factors lambdas, the section ends with an extremal condition on that region to make it as narrow as possible: satisfying this request is mandatory in machine learning since escaping the annealed region is a paramount necessity to accomplish learning as well as retrieval.

In section \ref{Sec:RS} we identify the replica symmetric solution, i.e. the pressure of the model under the self-averaging condition for the overlap. We study moreover, in the case of zero external field, the solution of the stationary condition for the replica symmetric functional around the origin. By investigating its stability in the cases up to $K=4$, we find a set of conditions that coincide with those ensuring the annealed solution.

\begin{figure}[!]
\begin{center}\label{Figura:Uno}
		\includegraphics[width=10cm]{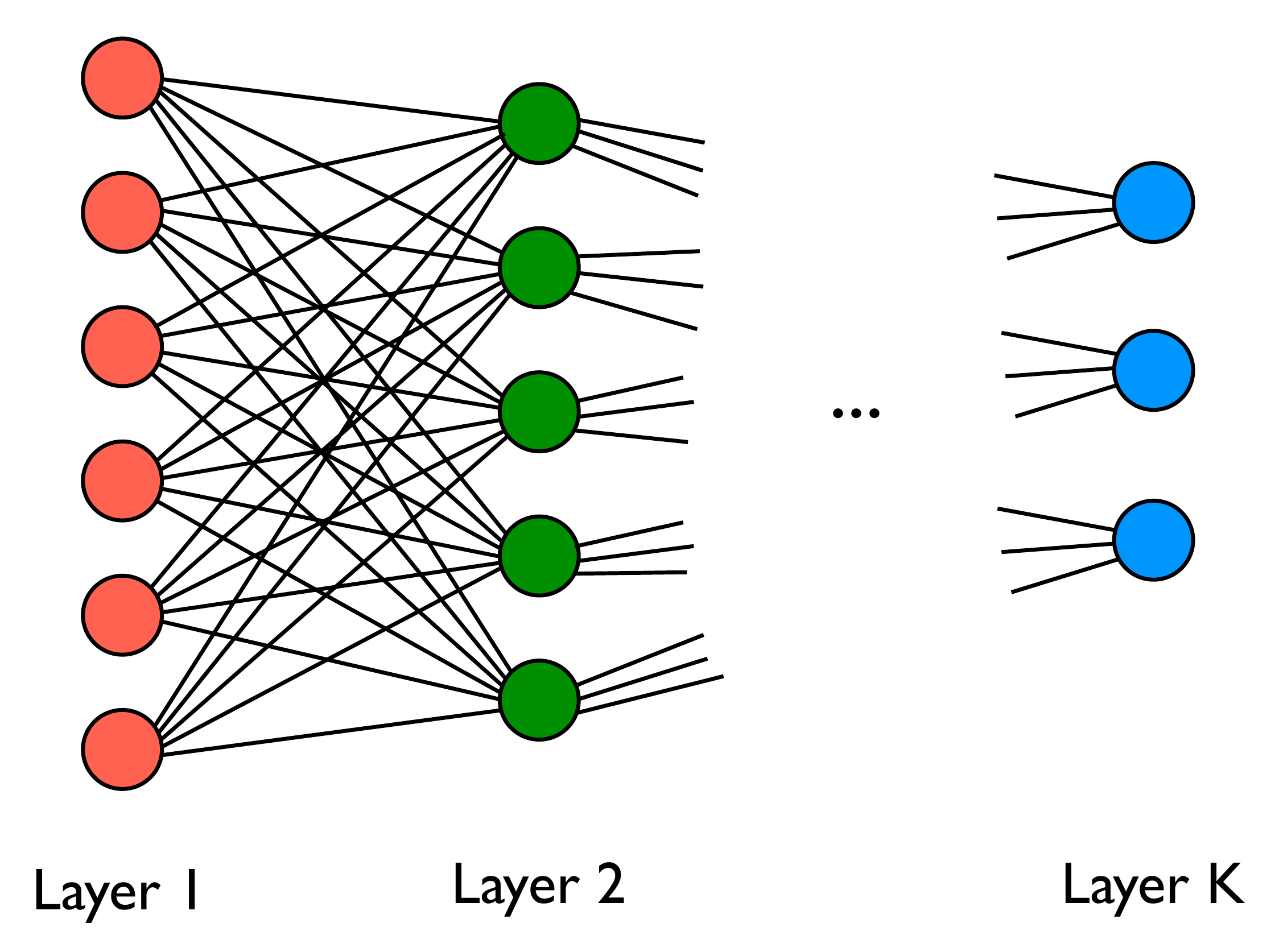}
        \caption{Schematic representation of the deep Boltzmann machine equipped with K layers under study. Each circle represents a binary neuron while all the interactions are drawn among neurons in adjacent layers (but there are no intra-layer interactions).}
\end{center}
\end{figure}

\section{Definitions.}
\label{definizioni}
The Deep Boltzmann Machine [DBM] under investigation here is the original one \cite{Hinton1}:
there are $N$ binary Ising spins
The weights connecting layers $L_p$ and $L_{p+1}$ are $N_p \times N_{p+1}$ real valued i.i.d. random couplings sampled from a Gaussian distribution.
\newline
We assume that the relative sizes, that we refer to as form factors, of the layers converge in the large volume limit:
\be\label{lambdadef}
\lambda_p^{(N)} \equiv\, \frac{N_p}{N} \,\xrightarrow[N\to\infty]{}\, \lambda_p \,\in[0,1]
\ee
for every $p=1,\dots,K\,$.
We denote by $\Lambda_N \equiv (N_1,\dots,N_K)$ the sizes of the layers defining the geometric structure underlying the DBM. Moreover we denote by $\vlambda=(\lambda_1,\dots,\lambda_K)$ the relative sizes in the large volume limit. Observe that $\sum_{p=1}^K\lambda_p=1\,$.
\newline

\begin{definition}
Considering $N$ spins $\sigma=(\sigma_i)_{i=1,\dots,N}\in\{-1,1\}^N$ arranged over $K$ layers $L_1,\dots,L_K$, the Hamiltonian of the (random) Deep Boltzmann Machine [DBM] is
\be \label{eq:H}
H_{\Lambda_N}(\sigma) \,\equiv\, -\frac{\sqrt{2}}{\sqrt{N}}\; \sum_{p=1}^{K-1}\, \sum_{(i,j)\in L_p\times L_{p+1}} J^{(p)}_{ij}\, \sigma_i\sigma_j
\ee
where $J^{(p)}_{ij}$, $(i,j)\in L_p\times L_{p+1}$, $p=1,\dots,K-1\,$ is a family of i.i.d. standard Gaussian random variables coupling spins in the layer $L_p$ to those in the layer $L_{p+1}\,$.\\
\end{definition}

\begin{definition}
Given two spin configurations $\sigma,\tau\in\{-1,1\}^N$, for every $p=1,\ldots,K$ we define the overlap over the layer $L_p$ as
\be
\qp(\sigma,\tau) \,\equiv\, \frac{1}{N_p}\,\sum_{i\in L_p} \sigma_i\,\tau_i \;\in[-1,1] \;.
\ee
\end{definition}

Therefore the covariance matrix of the Gaussian process $H_N$ can be written as
\be\label{eq:cov H}
\E \,H_{\Lambda_N}(\sigma)\, H_{\Lambda_N}(\tau) \,=\, 2\,N\,\sum^{K-1}_{p=1}\,\lambda_p^{(N)}\,\lambda_{p+1}^{(N)}\;\, \qp(\sigma,\tau)\;\qpp(\sigma,\tau)
\ee

\begin{definition}
Given $\beta>0$, the random partition function of the model introduced by the Hamiltonian (\ref{eq:H}) is
\be
Z_{\Lambda_N} (\beta) \,\equiv\, \sum_{\sigma\in\{-1,1\}^N} e^{-\beta\,H_{\Lambda_N}(\sigma)} \;.
\ee
and its quenched pressure density is
\be\label{qpres}
p^{DBM}_{\Lambda_N}(\beta) \,\equiv\, \frac{1}{N}\,\E\,\log Z_{\Lambda_N}(\beta)
\ee
where $\E$ to denote the expectation over all the couplings $J^{(p)}_{ij}\,$.\\
\end{definition}

\begin{remark}\label{addfield}
As it can be useful in machine learning \cite{DLbook}, we may also include a magnetic field within each layer by generalizing the Hamiltonian (\ref{eq:H}) as
\be
H'_{\Lambda_N}(\sigma) \,\equiv\, H_{\Lambda_N}(\sigma) \,+\, \sum_{p=1}^K\,  \sum_{i\in L_P} h^{(p)}_i \sigma_i,
\ee
where for any $p=1,\ldots K$,  $(h^{(p)}_i)_{i\in L_P}$ is family of i.i.d. random variables.
\end{remark}

\section{A lower bound for the quenched pressure of the DBM}\label{lb}

In this section we give an explicit bound for the quenched free energy of the DBM - composed by $K$ layers - in terms of $K$ independent Sherrington-Kirkpatrick spin-glasses [SK] (whose sizes share one-by-one the sizes of the DBM's layers).

Considering $N$ spin variables $\sigma_i$, $i=1,\dots,N$, we recall that the Hamiltonian of the SK model is
\be \label{eq:HSK}
H_{N}^{SK}(\sigma) \,\equiv\, -\frac{1}{\sqrt{N}}\; \sum_{i,j=1}^N J_{ij}\, \sigma_i\sigma_j
\ee
where $J_{ij}$, $i,j=1,\dots,N$ is a family of i.i.d. standard Gaussian random couplings.
Given two spin configurations $\sigma,\tau\in\{-1,1\}^N$, their overlap is
\be
q_N(\sigma,\tau) \,\equiv\, \frac{1}{N}\,\sum_{i=1}^N \sigma_i\,\tau_i \;\in[-1,1]
\ee
and the covariance matrix of the Gaussian process $H_N^{SK}$ is:
\be\label{eq:cov HSK}
\E \,H_{N}^{SK}(\sigma)\, H_{N}^{SK}(\tau) \,=\, N\, q_N^2(\sigma,\tau) \;.
\ee
Given an inverse temperature $\beta>0$, the random partition function of the SK model is
\be
Z_{N}^{SK} (\beta) \,\equiv\, \sum_{\sigma\in\{-1,1\}^N} e^{-\beta\,H_{N}^{SK}(\sigma)}
\ee
and its quenched pressure density is
\be\label{qpresSK}
p^{SK}_{N}(\beta) \,\equiv\, \frac{1}{N}\,\E\,\log Z_{N}^{SK}(\beta)
\ee
where $\E$ to denote the expectation over all the couplings $J_{ij}\,$.
The quenched pressure converges as $N\to\infty$ and we denote its limit by $p^{SK}(\beta)\,$ \cite{Panchenko-Book,GT,Guerra,Tala,MPV}.

Now let $\va = (a_p)_{p=1,\dots,K-1}$ be a sequence of positive numbers.
For every $p=1,\dots, K$ we consider an SK model of size $N_p$ at inverse temperature $\beta\,\sqrt{\lambda_p^{(N)}\,\theta_p(\va)}\,$, where we set
\be\label{gammap}
\begin{cases}
\theta_1(\va) \,\equiv\, a_1 \\
\theta_p(\va) \,\equiv\, \dfrac{1}{a_{p-1}} +\, a_p\ & \textrm{if }p=2,\dots,K-1 \\
\theta_K(\va) \,\equiv\, \dfrac{1}{a_{K-1}}
\end{cases} \ .
\ee
With the notation introduced we have the following:
\newline

\begin{theorem}\label{maint}
The quenched pressure of the DBM described by the cost function \eqref{eq:H} satisfies the following lower bound
\be\label{main}
\begin{split}
p^{DBM}_{\Lambda_N}(\beta) \,\geq\; & \sum_{p=1}^K \lambda_p^{(N)}\; p^{SK}_{N_p}\bigg(\beta\sqrt{\lambda_p^{(N)}\,\theta_p(\va)}\;\bigg) -\, \frac{\beta^2}{2}\,\sum_{p=1}^K \big(\lambda_p^{(N)}\big)^{2}\;\theta_p(\va) \ +\\
& +\, \beta^2\sum_{p=1}^{K-1}\lambda_p^{(N)}\,\lambda_{p+1}^{(N)}
\end{split}
\ee
where $\theta_p(\va)$ is defined by \eqref{gammap} and $\va\in(0,\infty)^{K-1}$ can be arbitrarily chosen.
Therefore:
\be\label{main2}
\begin{split}
\liminf_{N\to\infty}p^{DBM}_{\Lambda_N}(\beta) \,\geq\, &
\sup_{\va\in(0,\infty)^{K-1}} \left\{
\sum_{p=1}^K \lambda_p\; p^{SK}\Big(\beta\sqrt{\lambda_p\,\theta_p(\va)}\;\Big) \,-\,
\frac{\beta^2}{2}\sum_{p=1}^K \lambda_p^2\,\theta_p(\va) \right\} \,+\\
& +\, \beta^2 \sum_{p=1}^{K-1}\lambda_p\lambda_{p+1} \;.
\end{split}
\ee
\end{theorem}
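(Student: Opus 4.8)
The plan is to bound $p^{DBM}_{\Lambda_N}(\beta)$ from below by a Guerra--Toninelli Gaussian interpolation (equivalently a Slepian/Kahane comparison) between the DBM field $-\beta H_{\Lambda_N}$ and an auxiliary field built as a sum of $K$ mutually independent SK Hamiltonians, the $p$-th one acting only on the spins of layer $L_p$ at the rescaled inverse temperature $\beta_p\equiv\beta\sqrt{\lambda_p^{(N)}\,\theta_p(\va)}$. Since the layers partition the spins, these SK fields depend on disjoint blocks of variables, so the corresponding partition function factorizes across layers and its expected logarithm equals $\sum_p N_p\,p^{SK}_{N_p}(\beta_p)$, which reproduces exactly the first sum in \eqref{main} after dividing by $N$. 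Concretely, for $t\in[0,1]$ I would set $\mathcal H_t(\sigma)=\sqrt{t}\,\beta\,(-H_{\Lambda_N}(\sigma))+\sqrt{1-t}\,\sum_{p=1}^{K}\beta_p\,(-H^{SK}_{N_p}(\sigma))$, with the two families of couplings independent, and put $\varphi(t)=\tfrac1N\,\E\log\sum_\sigma e^{\mathcal H_t(\sigma)}$, so that $\varphi(1)=p^{DBM}_{\Lambda_N}(\beta)$ and $\varphi(0)=\sum_p\lambda_p^{(N)}\,p^{SK}_{N_p}(\beta_p)$.

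Next I would differentiate $\varphi$ and integrate by parts in the Gaussians. Writing $D(\sigma,\tau)$ for the difference of covariances $\beta^2\,\E H_{\Lambda_N}(\sigma)H_{\Lambda_N}(\tau)/N$ taken from \eqref{eq:cov H} minus the auxiliary one $\sum_p\beta_p^2\,\E H^{SK}_{N_p}(\sigma)H^{SK}_{N_p}(\tau)/N$ taken from \eqref{eq:cov HSK}, the standard computation gives $\varphi'(t)=\tfrac12\,D(\sigma,\sigma)-\tfrac12\,\E\langle D(\sigma,\tau)\rangle_t$, where $\langle\cdot\rangle_t$ is the interpolating Gibbs measure and $\sigma,\tau$ are two independent replicas. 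The diagonal piece $\tfrac12\,D(\sigma,\sigma)$ is deterministic (all overlaps equal $1$) and evaluates to $\beta^2\sum_{p=1}^{K-1}\lambda_p^{(N)}\lambda_{p+1}^{(N)}-\tfrac{\beta^2}{2}\sum_{p=1}^K(\lambda_p^{(N)})^2\,\theta_p(\va)$, i.e.\ precisely the two correction terms of \eqref{main}. Everything then hinges on controlling the sign of the replica piece, namely on proving $D(\sigma,\tau)\le0$ for every pair $\sigma,\tau$.

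This last point is the crux and is exactly what the definition \eqref{gammap} is engineered for. Writing $x_p\equiv\lambda_p^{(N)}\,\qp(\sigma,\tau)$, the inequality $D(\sigma,\tau)\le0$ reads $2\sum_{p=1}^{K-1}x_p\,x_{p+1}\le\sum_{p=1}^K\theta_p(\va)\,x_p^2$. I would obtain it from the elementary bound $2x_p\,x_{p+1}\le a_p\,x_p^2+a_p^{-1}\,x_{p+1}^2$ (valid for all reals since $a_p>0$), summed over $p=1,\dots,K-1$: collecting the coefficient of each $x_p^2$ telescopes to $a_1$ for $p=1$, to $a_{p-1}^{-1}+a_p$ for $2\le p\le K-1$, and to $a_{K-1}^{-1}$ for $p=K$, which are exactly $\theta_1(\va),\dots,\theta_K(\va)$ from \eqref{gammap}. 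Hence $\varphi'(t)\ge\tfrac12\,D(\sigma,\sigma)$ for all $t$, and integrating over $[0,1]$ yields $\varphi(1)\ge\varphi(0)+\tfrac12\,D(\sigma,\sigma)$, which is precisely \eqref{main}.

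Finally, for \eqref{main2} I would fix $\va$, pass to the limit $N\to\infty$ in \eqref{main} using $\lambda_p^{(N)}\to\lambda_p$ together with the convergence of the SK pressure and its continuity in the temperature to get $p^{SK}_{N_p}(\beta\sqrt{\lambda_p^{(N)}\theta_p(\va)})\to p^{SK}(\beta\sqrt{\lambda_p\theta_p(\va)})$ (degenerate layers with $\lambda_p=0$ contribute nothing, since the SK pressure per site stays bounded), then take the $\liminf$ and optimize over $\va\in(0,\infty)^{K-1}$, the bound holding for each choice. The interpolation and the integration by parts are routine; I expect the only genuinely delicate ingredient to be the design of the comparison model itself -- the rescaled temperatures $\beta_p$ and the telescoping identity for $\theta_p(\va)$ that makes the two-replica term favorable -- while the passage to the limit needs only mild care at vanishing form factors.
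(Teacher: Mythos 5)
Your proposal is correct and follows essentially the same route as the paper: the same Guerra-type interpolation between the DBM and $K$ independent layer-wise SK models at the rescaled temperatures $\beta\sqrt{\lambda_p^{(N)}\theta_p(\va)}$, with the sign of the two-replica remainder controlled by the telescoped inequality $2x_px_{p+1}\le a_p x_p^2+a_p^{-1}x_{p+1}^2$, which is exactly the paper's completion of the square in \eqref{square}--\eqref{signQn}. The passage to the limit for \eqref{main2}, including the remark on vanishing form factors, is also handled as in the paper.
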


\begin{proof}
For every $p=1,\ldots,K$ let $H_{L_p}^{SK}(s)$, $s\in\{-1,1\}^{L_p}\,$ be a gaussian process representing the Hamiltonian of an SK model over the $N_p$ spin variables in the layer $L_p\,$.
We assume that $H_{L_1}^{SK},\dots,H_{L_K}^{SK}$ are independent processes, also independent of $H_{\Lambda_N}$, the Hamiltonian of the DBM \eqref{eq:H}.
For $\sigma\in\{-1,1\}^N$ and $t\in[0,1]$ we define an interpolating Hamiltonian as follows:
\be\label{inth}
\mathcal H_{N}(\sigma,t) \,\equiv\,
\sqrt{t}\; H_{\Lambda_N}(\sigma) \,+\,
\sqrt{1-t}\;\sum_{p=1}^K\, \sqrt{\lambda_p^{(N)}\,\theta_p(\va)}\,\; H_{L_p}^{SK}(\sigma_{L_p}) \;,
\ee
where of course $\sigma_{L_p}\equiv(\sigma_i)_{i\in L_p}\,$.
An interpolating pressure is naturally defined as
\be\label{intpe}
\varphi_{N}(t)\,\equiv\, \frac{1}{N}\, \E\,\log\,\mathcal Z_{N}(t) \ ,
\ee
where
\be\label{partinn}
\mathcal Z_{N}(t)\,\equiv\, \sum_{\sigma\in\{-1,1\}^N} e^{-\beta\,\mathcal H_{N}(\sigma,t)} \;.
\ee
Observe that the quenched pressure of the DBM and a convex combination of quenched pressures of SK models are recovered at the endpoints of $[0,1]\,$:
\begin{align} \label{t1}
& \varphi_{N}(1) \,=\, p^{DBM}_{\Lambda_N}(\beta) \;,\\
\label{t0}
& \varphi_{N}(0) \,=\, \sum_{p=1}^K \lambda_p^{(N)}\,\; p^{SK}_{N_p}\bigg(\beta\,\sqrt{\lambda_p^{(N)}\,\theta_p(\va)}\;\bigg) \;.
\end{align}

For every function $f:\{-1,1\}^{N}\times\{-1,1\}^{N}\rightarrow \R\,$ we denote
\be\label{qexp}
\left\langle\, f\,\right\rangle_{N,t} \,\equiv\,
\E\,\sum_{\sigma,\tau}\frac{e^{-\beta\,\mathcal H_{N}(\sigma,t)-\beta\,\mathcal H_{N}(\tau,t)}}{\mathcal Z_N^2(t)}\,f(\sigma,\tau) \;.
\ee
Let $Q_N:\{-1,1\}^{N}\times\{-1,1\}^{N}\rightarrow \R\,$,
\be \label{Qn}
Q_N\,\equiv\;
2\sum_{p=1}^{K-1} \lambda_p^{(N)}\lambda_{p+1}^{(N)}\; \qp\,\qpp \,-\,
\sum_{p=1}^K \big(\lambda_p^{(N)}\big)^{2}\, \theta_p(\va)\; \qp^2 \;,
\ee
then Gaussian integration by parts leads to the following result:
\be\label{deriv_ann}
\frac{d\varphi_N}{dt} \,=\,
\frac{\beta^2}{2}\, \Bigg(2\sum_{p=1}^{K-1}\lambda_p^{(N)} \lambda_{p+1}^{(N)} \,-\, \sum_{p=1}^K \big(\lambda_p^{(N)}\big)^{2}\, \theta_p(\va) \Bigg) \,-\,
\frac{\beta^2}{2}\,\Big\langle Q_N \Big\rangle_{N,t} \;.
\ee
Now by definition \eqref{gammap} of $\theta_p(\va)$, we may rewrite
\be \label{square}
\sum_{p=1}^K \big(\lambda_p^{(N)}\big)^{2}\, \theta_p(\va)\; \qp^2 \,=\,
\sum_{p=1}^{K-1} \bigg(\lambda_p^{(N)}\,\sqrt{a_p}\;\qp\bigg)^{2} \,+\, \sum_{p=1}^{K-1} \bigg(\lambda_{p+1}^{(N)}\,\frac{1}{\sqrt{a_p}}\,\qpp \bigg)^{\!2}
\ee
and plugging \eqref{square} into \eqref{Qn}, we find out that
\be \label{signQn}
Q_N \,=\, -\sum_{p=1}^{K-1} \bigg(\lambda_p^{(N)}\,\sqrt{a_p}\;\qp \,-\, \lambda_{p+1}^{(N)}\,\frac{1}{\sqrt{a_p}}\,\qpp \bigg)^{\!2} \,\leq 0 \;.
\ee
The thesis follows immediately from \eqref{t0}, \eqref{t1}, \eqref{deriv_ann} and \eqref{signQn}.
\end{proof}

\section{The annealed region of the DBM} \label{sec:ann}

In this Section we identify a region in which the quenched and the annealed pressure of the DBM coincide. The boundary delimiting this region will be given in Proposition \ref{explicit}.
\newline
Let $p^{SK}(\beta)$ be the limiting quenched pressure of an SK model at inverse temperature $\beta$ and let $p^{A}(\beta)$ be its annealed expression. By Jensen inequality:
\be\label{anndis}
p^{SK}(\beta) \,\leq\,  p^{A}(\beta) = \log 2 +\frac{\beta^2}{2}
\ee
and equality is achieved in the so called annealed region of the SK model \cite{ALR,CG,Panchenko-Book,Tala}:
\be\label{anneq}
p^{SK}(\beta) \,=\,  p^{A}(\beta) \quad \textrm{if }\beta^2\leq\frac{1}{2} \; ,
\ee
notice that the region, due to a different parametrisation, is different than the one appearing in \cite{ALR}.
This observation combined with Theorem \ref{maint} entail our result on the annealed region of the DBM.
Consider the following set of parameters:
\be\label{AK}
A_K \,\equiv\, \Big\{(\beta,\vlambda):\ \beta^2\,\lambda_p\,\theta_p(\va) \,\leq\,\frac{1}{2}\, \textrm{ for all } p=1,\dots,K \textrm{ and some } \va\in(0,\infty)^{K-1} \Big\}
\ee
where $\theta_p(\va)$ is defined in \eqref{gammap}.
\newline

\begin{theorem}\label{annt}
For $(\beta,\vlambda)\in A_K$, the quenched and the annealed pressure of the DBM coincide in the thermodynamic limit. Precisely, there exists
\be\label{annDBM}
\lim_{N\to\infty} p^{DBM}_{\Lambda_N}(\beta) \,=\, \lim_{N\to\infty}\frac{1}{N}\log\E\,Z_{\Lambda_N}(\beta) \,=\,
\log 2 + \beta^2\,\sum_{p=1}^{K-1}\lambda_p\lambda_{p+1} \;.
\ee
\end{theorem}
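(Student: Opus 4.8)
The plan is to prove the statement by squeezing the quenched pressure between the annealed one from above and the lower bound of Theorem~\ref{maint} from below, and to show that on $A_K$ the two coincide. First I would compute the annealed pressure explicitly. Setting $\sigma=\tau$ in the covariance formula \eqref{eq:cov H} and using $\qp(\sigma,\sigma)=1$ for every layer, the centered Gaussian $H_{\Lambda_N}(\sigma)$ has variance $2N\sum_{p=1}^{K-1}\lambda_p^{(N)}\lambda_{p+1}^{(N)}$, independent of $\sigma$. Hence $\E\,e^{-\beta H_{\Lambda_N}(\sigma)}=\exp\big(\beta^2 N\sum_{p=1}^{K-1}\lambda_p^{(N)}\lambda_{p+1}^{(N)}\big)$ and, summing over the $2^N$ configurations,
\be
\frac1N\log\E\,Z_{\Lambda_N}(\beta)\,=\,\log 2+\beta^2\sum_{p=1}^{K-1}\lambda_p^{(N)}\lambda_{p+1}^{(N)}\,\xrightarrow[N\to\infty]{}\,\log 2+\beta^2\sum_{p=1}^{K-1}\lambda_p\lambda_{p+1}\;.
\ee
By Jensen's inequality $p^{DBM}_{\Lambda_N}(\beta)\le \frac1N\log\E\,Z_{\Lambda_N}(\beta)$, which already yields the matching upper bound on $\limsup_N p^{DBM}_{\Lambda_N}(\beta)$, with no hypothesis on $(\beta,\vlambda)$.

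For the lower bound I would invoke the limiting inequality \eqref{main2}. Since $(\beta,\vlambda)\in A_K$, by definition \eqref{AK} there exists $\va\in(0,\infty)^{K-1}$ such that $\beta^2\lambda_p\theta_p(\va)\le \tfrac12$ for all $p=1,\dots,K$; for this choice every one of the $K$ auxiliary SK models sits in its annealed region, so by \eqref{anneq} we have $p^{SK}\big(\beta\sqrt{\lambda_p\theta_p(\va)}\big)=\log 2+\tfrac{\beta^2}{2}\lambda_p\theta_p(\va)$. The crucial step is then to substitute this into the bracket in \eqref{main2}: using $\sum_{p=1}^K\lambda_p=1$, the two sums carrying $\theta_p(\va)$ cancel exactly,
\be
\sum_{p=1}^K\lambda_p\Big(\log2+\tfrac{\beta^2}{2}\lambda_p\theta_p(\va)\Big)-\frac{\beta^2}{2}\sum_{p=1}^K\lambda_p^2\,\theta_p(\va)\,=\,\log 2\;,
\ee
so the supremum over $\va$ in \eqref{main2} is at least $\log 2$ and therefore $\liminf_N p^{DBM}_{\Lambda_N}(\beta)\ge \log 2+\beta^2\sum_{p=1}^{K-1}\lambda_p\lambda_{p+1}$.

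Combining the two bounds forces $\liminf_N$ and $\limsup_N$ to agree at the common value, which proves simultaneously the existence of the limit and its coincidence with the annealed pressure. The argument is essentially a clean sandwich, so no single step is a serious obstacle; the one point that must be verified carefully is precisely the cancellation above, namely that the penalty $-\tfrac{\beta^2}{2}\sum_p\lambda_p^2\theta_p(\va)$ subtracted in Theorem~\ref{maint} is exactly the quadratic contribution generated by the annealed SK pressures. This exact matching is what makes the lower bound tight on all of $A_K$ and, through the defining constraints $\beta^2\lambda_p\theta_p(\va)\le\tfrac12$, explains why $A_K$ is the natural annealed region to single out.
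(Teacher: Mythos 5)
Your proposal is correct and follows essentially the same route as the paper: Jensen for the upper bound and Theorem~\ref{maint} for the lower bound, with the key cancellation between $\sum_p\lambda_p\,p^{A}\big(\beta\sqrt{\lambda_p\theta_p(\va)}\big)$ and the penalty $\tfrac{\beta^2}{2}\sum_p\lambda_p^2\theta_p(\va)$ being exactly the step the paper encodes in \eqref{main 3}. The only cosmetic difference is that you compute the annealed pressure explicitly, while the paper leaves that to Jensen's inequality without spelling it out.
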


\begin{proof}
The lower bound \eqref{main2} found in Theorem \ref{maint} rewrites as follows:
\be\label{main 3}
\begin{split}
\liminf_{N\to\infty}p^{DBM}_{\Lambda_N}(\beta) \,\geq\,& \,\sup_{\va\in(0,\infty)^{K-1}}\,
\sum_{p=1}^K\, \lambda_p\,\bigg(p^{SK}\Big(\beta\sqrt{\lambda_p\,\theta_p(\va)}\;\Big) \,-\, p^{A}\Big(\beta\sqrt{\lambda_p\,\theta_p(\va)}\;\Big)\bigg) \,+\\
& \,+\, \log2 \,+\, \beta^2\sum_{p=1}^{K-1}\lambda_p\lambda_{p+1} \;.
\end{split}
\ee
Thanks to \eqref{anndis} and \eqref{anneq}, if $(\beta,\vlambda)\in A_K$ then the supremum in \eqref{main 3} vanishes and
\be\label{queann}
\liminf_{N\to\infty}p^{DBM}_{\Lambda_N}(\beta) \,\geq\, \log 2\,+\,\beta^2 \sum_{p=1}^{K-1}\lambda_p\lambda_{p+1} \;.
\ee
The reversed bound for $\limsup_{N\to\infty}p^{DBM}_{\Lambda_N}(\beta)$ follows immediately by Jensen inequality.
\end{proof}

Theorem \ref{annt} can be used to obtain a sufficient  condition on $(\beta,\vlambda)$  in order to have equality between quenched and annealed  pressures of the DBM.
In the following we focus on networks made up with two, three or four layers ($K\leq 4$).
\newline

\begin{proposition}\label{explicit}
Consider a $DBM$ with $K=2,3,4$ layers.
The annealed region $A_K$ defined in \eqref{AK} rewrites as
\be\label{AKexplicit}
A_K \,=\, \big\{ (\beta,\vlambda):\, 4\beta^4 \leq \phi_K(\vlambda) \big\} \;,
\ee
where we set
\begin{align} \label{phi2}
& \phi_2(\vlambda) \equiv\, \frac{1}{\lambda_1\lambda_2} \\ \label{phi3}
& \phi_3(\vlambda) \equiv\, \frac{1}{\lambda_1\lambda_2+\lambda_2\lambda_3} \\ \label{phi4}
& \phi_4(\vlambda) \equiv\, \min\{t>0:\, 1-t\,(\lambda_1\lambda_2+\lambda_2\lambda_3+\lambda_3\lambda_4)+t^2\,\lambda_1\lambda_2\lambda_3\lambda_4=0 \} \;.
\end{align}
\end{proposition}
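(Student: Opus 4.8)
The plan is to strip the definition \eqref{AK} down to a pure feasibility question and then resolve it by a single elimination pass along the layer index. First I would introduce the shorthand $c_p \equiv 1/(2\beta^2\lambda_p)$, so that membership $(\beta,\vlambda)\in A_K$ is exactly the existence of $\va=(a_1,\dots,a_{K-1})\in(0,\infty)^{K-1}$ with $\theta_p(\va)\le c_p$ for every $p=1,\dots,K$. Spelling out \eqref{gammap}, this is the chain of constraints $a_1\le c_1$, then $a_p+1/a_{p-1}\le c_p$ for $2\le p\le K-1$, and finally $1/a_{K-1}\le c_K$.

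The structure to exploit is that each interior variable $a_p$ enters two consecutive inequalities with opposite monotonicity: additively (as $+a_p$) in the constraint indexed $p$, where one wants it small, and reciprocally (as $1/a_p$) in the constraint indexed $p+1$, where one wants it large. Hence feasibility is decided by a greedy forward sweep: the largest value $a_1$ may take is $b_1\equiv c_1$, and inductively the largest admissible $a_p$ is $b_p\equiv c_p-1/b_{p-1}$, since pushing $a_{p-1}$ up to $b_{p-1}$ shrinks $1/a_{p-1}$ and so maximises the room left for $a_p$. The system is feasible precisely when every $b_p$ stays positive and the terminal constraint $1/a_{K-1}\le c_K$ can be met, i.e. $b_{K-1}\ge 1/c_K$. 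For $K=2$ this reads $c_1\ge 1/c_2$, and substituting $c_p=1/(2\beta^2\lambda_p)$ gives $4\beta^4\lambda_1\lambda_2\le1$, which is \eqref{phi2}; for $K=3$ it reads $c_2-1/c_1\ge 1/c_3$, i.e. $2\beta^2(\lambda_1+\lambda_3)\le 1/(2\beta^2\lambda_2)$, which rearranges to \eqref{phi3}.

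For $K=4$ the sweep yields $b_2=c_2-1/c_1$ and $b_3=c_3-1/b_2$, and the terminal condition $b_3\ge 1/c_4$ rearranges, using $b_2>0$, to the single inequality $(c_2-1/c_1)(c_3-1/c_4)\ge1$. Here I would clear denominators with $c_p=1/(2\beta^2\lambda_p)$ and set $t\equiv 4\beta^4$; a direct computation turns the product condition into $P(t)\ge0$, where $P(t)\equiv 1-t(\lambda_1\lambda_2+\lambda_2\lambda_3+\lambda_3\lambda_4)+t^2\lambda_1\lambda_2\lambda_3\lambda_4$ is exactly the quadratic in \eqref{phi4}.

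The delicate point, and the one I expect to be the main obstacle, is to show that ``$P(t)\ge0$ together with the positivity $b_2>0$'' collapses to the clean statement $t\le\phi_4(\vlambda)$. The argument I have in mind uses the factorisation $P(t)=(1-t\lambda_1\lambda_2)(1-t\lambda_3\lambda_4)-t\lambda_2\lambda_3$: one reads off $P(0)=1>0$ while $P(1/(\lambda_1\lambda_2))=-\lambda_3/\lambda_1<0$, which already guarantees two real roots, hence a well-defined smallest positive root $\phi_4$, and places it strictly below $1/(\lambda_1\lambda_2)$ (and, by the same evaluation, below $1/(\lambda_3\lambda_4)$). Consequently, for $t\le\phi_4$ both factors $1-t\lambda_1\lambda_2$ and $1-t\lambda_3\lambda_4$ are positive, so $b_2>0$ is automatic, while $P\ge0$ holds because $P$ is nonnegative on $[0,\phi_4]$; conversely, for $t>\phi_4$ one is either in the interval where $P<0$, or past the larger root, where $t>1/(\lambda_1\lambda_2)$ forces $b_2<0$. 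In either case the system is infeasible, which closes the equivalence $A_4=\{4\beta^4\le\phi_4(\vlambda)\}$.
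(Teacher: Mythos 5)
Your proof is correct and follows essentially the same route as the paper's: both eliminate the auxiliary variables $a_p$ by exploiting the opposite monotonicities of $\theta(x,y)=1/x+y$, reduce the $K=4$ case to the product condition $(1-t\lambda_1\lambda_2)(1-t\lambda_3\lambda_4)\geq t\,\lambda_2\lambda_3$ with $t=4\beta^4$, and identify the feasible set with $\{t\leq t_-\}$ for the same quadratic. The one place you go beyond the paper is welcome: where the paper merely asserts that $t_-\leq 1/(\lambda_1\lambda_2)\leq t_+$, you prove it by the sign evaluation $P\big(1/(\lambda_1\lambda_2)\big)=-\lambda_3/\lambda_1<0$, which also cleanly disposes of the positivity constraint $b_2>0$ on both sides of the equivalence.
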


\begin{proof}
For $K=2$, $(\beta,\lambda_1,\lambda_2)\in A_2$ if and only if
\be\label{dcon2}
\exists\, a_1>0\ \textrm{ s.t. }
\begin{cases}
a_1 \,\leq\, \dfrac{1}{2\beta^2\lambda_1} \\
\dfrac{1}{a_1} \,\leq\, \dfrac{1}{2\beta^2\lambda_2}
\end{cases}
\quad \Leftrightarrow\quad
4\beta^4\, \lambda_1\lambda_2 \,\leq\, 1 \;.
\ee
As expected we have re-obtained the same result achieved in \cite{bipartiti} by a second moment argument.

In order to extend the computations to $K>2$, we set $\theta(x,y)\equiv \dfrac{1}{x} + y$ for every $x,y>0$. The following (trivial) observation about the monotonicity of $\theta(x,y)$ will be useful:
\be\label{incdec}
\theta(x,y)\;\searrow\;\textrm{w.r.t.}\  x>0 \quad\textrm{and}\quad \theta(x,y)\;\nearrow\;\textrm{w.r.t.}\ y>0 \,.
\ee
Now for $K=3$, $(\beta,\lambda_1,\lambda_2,\lambda_3)\in A_3$ if and only if
\be\label{dcon3}
\exists\, a_1,a_2>0\ \textrm{ s.t. }
\begin{cases}
a_1 \,\leq\, \dfrac{1}{2\beta^2\lambda_1} \\
\theta(a_1,a_2) \,\leq\, \dfrac{1}{2\beta^2\lambda_2} \\
\dfrac{1}{a_2} \,\leq\, \dfrac{1}{2\beta^2\lambda_3}
\end{cases} \;.
\ee
By \eqref{incdec} one can choose without loss of generality $a_1 = \dfrac{1}{2\beta^2\lambda_1}$ and
$a_2 = 2\beta^2\lambda_3\,$. Precisely equation \eqref{dcon3} holds if and only if
\be
\theta\bigg(\frac{1}{2\beta^2\lambda_1}\,,\,2\beta^2\lambda_3\bigg) \,\leq\, \dfrac{1}{2\beta^2\lambda_2}
\quad \Leftrightarrow\quad
4\beta^4\, (\lambda_1\lambda_2 +\lambda_2\lambda_3) \,\leq\, 1 \;.\\
\ee

For $K=4$, $(\beta,\lambda_1,\lambda_2,\lambda_3,\lambda_4)\in A_4$ if and only if
\be\label{dcon4}
\exists\, a_1,a_2,a_3>0\ \textrm{ s.t. }
\begin{cases}
a_1 \,\leq\, \dfrac{1}{2\beta^2\lambda_1} \\
\theta(a_1,a_2) \,\leq\, \dfrac{1}{2\beta^2\lambda_2} \\
\theta(a_2,a_3) \,\leq\, \dfrac{1}{2\beta^2\lambda_3} \\
\dfrac{1}{a_3} \,\leq\, \dfrac{1}{2\beta^2\lambda_4}
\end{cases}\;.
\ee
Using property \eqref{incdec} of the function $\theta$, \eqref{dcon4} rewrites as:
\be
\exists\, a_2>0\ \textrm{ s.t. }
\begin{cases}
\theta\bigg(\dfrac{1}{2\beta^2\lambda_1}\,,\,a_2\bigg) \,\leq\, \dfrac{1}{2\beta^2\lambda_2} \\
\theta(a_2\,,\,2\beta^2\lambda_4) \,\leq\, \dfrac{1}{2\beta^2\lambda_3}
\end{cases} \;,
\ee
which is equivalent to
\be\label{dcon4b}
\begin{cases}
(1-4\beta^2\lambda_1\lambda_2)\,(1-4\beta^2\lambda_3\lambda_4) \,\geq\, 4\beta^4\,\lambda_2\lambda_3 \\
1 - 4\beta^4\, \lambda_1\lambda_2 \,\geq\, 0
\end{cases} \;.
\ee
Setting $t\equiv 4\beta^4$, the first inequality in \eqref{dcon4b} rewrites as $t\leq t_-\lor\,t\geq t_+$ where $t_{\pm}$ are the solutions of equation $1-t\,(\lambda_1\lambda_2+\lambda_2\lambda_3+\lambda_3\lambda_4)+t^2\,\lambda_1\lambda_2\lambda_3\lambda_4=0\,$.
Now, since $\sum_{p=1}^4 \lambda_p=1$, it is possible to prove that $t_-\leq\frac{1}{\lambda_1\lambda_2}\leq t_+\,$.
Therefore \eqref{dcon4b} is equivalent to $t\leq t_-\,$.
\end{proof}

We are interested in the $\vlambda$'s that make the region $A_K$ as small as possible.
By Proposition \ref{explicit}, we simply have to compute the infimum of $\phi_K(\vlambda)$, constraining over $\sum_{p=1}^K\lambda_p=1$ and $\lambda_p\geq0$ for every $p=1,\dots,K$.
Standard computations lead to the following

\begin{corollary}
For $K=2,3,4$
\be\label{infphi}
\inf \phi_K(\vlambda) = 4 \;.
\ee
In particular when $\beta\leq1$ the DBM is in the annealed regime for any choice of $\vlambda\,$.
Moreover the infimum of $\phi_K(\lambda)$ is reached for
\be\label{infshape}
\begin{cases}
\lambda_1=\lambda_2=\frac{1}{2} &\textrm{if }K=2\\
\lambda_2=\frac{1}{2},\, \lambda_1+\lambda_3=\frac{1}{2} &\textrm{if }K=3\\
(\lambda_4=0,\, \lambda_2=\frac{1}{2}, \lambda_1+\lambda_3=\frac{1}{2})\ \textrm{or}\ (\lambda_1=0,\, \lambda_3=\frac{1}{2}, \lambda_2+\lambda_4=\frac{1}{2}) & \textrm{if }K=4
\end{cases} \;.
\ee
\end{corollary}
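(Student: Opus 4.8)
The plan is to solve, for each $K\in\{2,3,4\}$, the constrained minimization of $\phi_K$ over the closed simplex $\{\lambda_p\ge 0,\ \sum_p\lambda_p=1\}$. Since each $\phi_K$ is the reciprocal of a quantity built from the cross-terms $\lambda_p\lambda_{p+1}$ of the form factors, minimizing $\phi_K$ is equivalent to maximizing those cross-terms, and every bound will be a disguised AM--GM inequality. For $K=2$ I would simply maximize $\lambda_1\lambda_2$ under $\lambda_1+\lambda_2=1$: AM--GM gives $\lambda_1\lambda_2\le 1/4$ with equality iff $\lambda_1=\lambda_2=1/2$, so $\inf\phi_2=4$. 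For $K=3$ I would factor the denominator as $\lambda_1\lambda_2+\lambda_2\lambda_3=\lambda_2(\lambda_1+\lambda_3)=\lambda_2(1-\lambda_2)$, whose maximum over the simplex is again $1/4$, attained at $\lambda_2=1/2$ with $\lambda_1+\lambda_3=1/2$ arbitrary; hence $\inf\phi_3=4$.

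The genuine work is $K=4$. Writing $B\equiv\lambda_1\lambda_2+\lambda_2\lambda_3+\lambda_3\lambda_4$ and $C\equiv\lambda_1\lambda_2\lambda_3\lambda_4$, the defining equation in \eqref{phi4} reads $Ct^2-Bt+1=0$, and rationalizing its smaller positive root puts $\phi_4$ in the convenient form
\be
\phi_4=\frac{2}{B+\sqrt{B^2-4C}}\,.
\ee
First I would check this is well defined, i.e.\ $B^2-4C\ge 0$: dropping the nonnegative middle term and applying AM--GM, $B\ge \lambda_1\lambda_2+\lambda_3\lambda_4\ge 2\sqrt{\lambda_1\lambda_2\lambda_3\lambda_4}=2\sqrt C$, so $B^2\ge 4C$ and both roots are real and positive (this also shows the expression degenerates gracefully to $1/B$ as $C\to 0$). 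Because $C\ge 0$ forces $\sqrt{B^2-4C}\le B$, the displayed formula yields the clean estimate $\phi_4\ge 1/B$, so it suffices to prove $B\le 1/4$ on the simplex.

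To bound $B$ I would note it has no stationary point in the interior: the Lagrange conditions $\partial_{\lambda_1}B=\partial_{\lambda_2}B=\partial_{\lambda_3}B=\partial_{\lambda_4}B$ give $\lambda_2=\lambda_3$, $\lambda_1=0$ and $\lambda_4=0$, which is not interior, so the maximum lies on a face; on each face $B$ reduces to a one- or two-variable product bounded by $1/4$ via AM--GM (e.g.\ on $\lambda_4=0$ one has $B=\lambda_2(1-\lambda_2)\le 1/4$), whence $\max B=1/4$ and $\phi_4\ge 1/B\ge 4$. For the equality analysis, the chain $\phi_4\ge 1/B\ge 4$ is tight exactly when simultaneously $C=0$ (so that $\sqrt{B^2-4C}=B$) and $B=1/4$; imposing these on the relevant face recovers precisely $\lambda_2=1/2,\ \lambda_1+\lambda_3=1/2$ (if $\lambda_4=0$) or $\lambda_3=1/2,\ \lambda_2+\lambda_4=1/2$ (if $\lambda_1=0$), matching \eqref{infshape}. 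Finally $\inf\phi_K=4$ together with $A_K=\{4\beta^4\le\phi_K\}$ immediately gives that $\beta\le 1$ keeps the DBM annealed for every $\vlambda$. The one delicate point I anticipate is the maximization of $B$ for $K=4$: one must verify carefully that no interior maximizer exists and that the reduction to the faces is exhaustive, after which the rest is routine AM--GM bookkeeping.
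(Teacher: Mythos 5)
Your argument is correct, and it actually supplies more than the paper does: the paper states only that the result follows from ``standard analytic computations'' for $K=2,3$ and from a Mathematica-assisted computation for $K=4$, with no written proof. For $K=2,3$ your AM--GM reductions are exactly the standard computations intended. For $K=4$ your route is a genuine, self-contained analytic proof: rationalizing the smaller root to get $\phi_4=2/\bigl(B+\sqrt{B^2-4C}\bigr)$ with $B=\lambda_1\lambda_2+\lambda_2\lambda_3+\lambda_3\lambda_4$, $C=\lambda_1\lambda_2\lambda_3\lambda_4$, checking $B^2\ge 4C$, and then running the chain $\phi_4\ge 1/B\ge 4$ with equality iff $C=0$ and $B=1/4$ is clean and replaces the computer-algebra step entirely; a further shortcut for the bound on $B$ is $B\le(\lambda_1+\lambda_3)(\lambda_2+\lambda_4)\le 1/4$, since the left inequality loses exactly the term $\lambda_1\lambda_4$, which also streamlines the equality analysis. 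Two small points to tighten: (i) in the equality discussion you only impose $C=0$ via the faces $\lambda_4=0$ and $\lambda_1=0$, but $C=0$ also allows $\lambda_2=0$ or $\lambda_3=0$; there $B=\lambda_3\lambda_4$ or $\lambda_1\lambda_2$ equals $1/4$ only at $(0,0,\tfrac12,\tfrac12)$ or $(\tfrac12,\tfrac12,0,0)$, which are already contained in the two stated families, so the conclusion stands but the case check should be recorded; (ii) note the degenerate point where $B=0$ (hence the set in \eqref{phi4} is empty) plays no role in the infimum. With those remarks your proof is complete and, for $K=4$, strictly more informative than the paper's.
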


These result have been obtained trough standard analytic computations for the cases $K=2,3$ and with the support of Mathematica for $K=4$. For the general $K$ case instead a refinement of the techniques is needed and could be a topic for future investigations.

These values of $\vlambda$ can be viewed as the shape that a DBM should have in order the maximally compress the annealed region. The duality among disorder-to-order transition in statistical mechanics of disordered systems and detectability-undetectability transition in machine learning  (see e.g. \cite{dualità,Peter1,Peter2,Barbier,cocco,Aurelienne,Mezard}) suggests that the knowledge of the optimal shapes stemmed from the former could play some role in the latter.

\section{A replica symmetric approximation for the DBM} \label{Sec:RS}

In this section we derive a replica symmetric expression for the intensive pressure of the DBM and we show that it is consistent with the results found in the previous section.
By Theorem \ref{annt} and Proposition \ref{explicit} an annealed region $A_K$ has been identified, even if in principle quenched and annealed pressure could coincide on a larger region of the parameters $(\beta,\lambda)$. However, in Proposition \ref{stabilityprop} we will see that the annealed solution is stable for the replica symmetric functional only in the interior of the region $A_K$. This fact suggests that $A_K$ could actually identify the whole annealed region of the DBM.

For every $p=1,\dots,K$, we consider in this section also a random external field $h_i^{(p)}$ acting on the spin $\sigma_i$ for $i\in L_p$  (see Remark \ref{addfield}). The $h_i^{(p)}$ for $i\in L_p$ are i.i.d. copies of a random variable $h^{(p)}$ satisfying $\E |h^{(p)}|<\infty$. All the $(h_i^{(p)})_{i\in L_p}$ for $p=1\dots,K$ are independent and independent also of the disorder of the process $H_{\Lambda_N}$.
We denote by $h$ the relevant parameters coming from all the above random variables.
The quenched pressure density of the model is thus
\be\label{qpres2}
p^{DBM}_{\Lambda_N}(\beta,h) \,\equiv\, \frac{1}{N}\; \E\,\log\, \sum_{\sigma}\, \exp\bigg(\!-\beta H_{\Lambda_N}(\sigma) \,+\, \sum_{p=1}^K\, \sum_{i\in L_p} h_i^{(p)}\sigma_i \,\bigg)
\ee
where $H_{\Lambda_N}$ was defined in \eqref{eq:H}.

For $y=(y_p)_{p=1,\dots,K}\in[0,\infty)^K$ the \textit{replica symmetric functional} of the DBM is defined as
\be\label{PRS}
\begin{split}
\mathcal{P}^{RS}_{\Lambda_N}(y,\beta,h) \,\equiv\; &
\sum_{p=1}^{K} \lambda_p^{(N)}\, \E_{z,h}\log\cosh\Big(\beta\,\sqrt{2}\,\sqrt{\lambda_{p-1}^{(N)}\,y_{p-1}+\lambda_{p+1}^{(N)}\,y_{p+1}}\;z + h^{(p)}\Big) \;+\\
& +\, \beta^2 \sum_{p=1}^{K-1}\lambda_p^{(N)}\lambda_{p+1}^{(N)}\,(1-y_p)\,(1-y_{p+1}) \,+\, \log2
\end{split}
\ee
where $z$ is a standard Gaussian random variable independent of $h^{(1)},\dots,h^{(p)}$, and for convenience we set  $y_0\equiv y_{K+1}\equiv \lambda_0^{(N)}\equiv \lambda_{K+1}^{(N)}\equiv 0\,$.
Its limit as $N\to\infty$ is denoted by $\mathcal{P}^{RS}(y,\beta,h,\lambda)\,$.
Definition \eqref{PRS} is motivated by the following
\newline

\begin{proposition}
For every $y=(y_p)_{p=1,\dots,K}\in[0,\infty)^K\,$, the following identity holds:
\be\label{ftc}
p^{DBM}_{\Lambda_N}(\beta,h) \,=\, \mathcal{P}^{RS}_{\Lambda_N}(y,\beta,h) \,-\, \beta^2\, \int_0^1 \Big\langle {\tilde Q}_N \Big\rangle_{N,t} \,dt \;,
\ee
where $\langle\,\cdot\,\rangle_{N,t}$ denotes the quenched Gibbs expectation associated to a suitable Hamiltonian and for every $\sigma.\tau\in\{-1,1\}^{\Lambda_N}$
\be \label{Qn_rs}
{\tilde Q}_N(\sigma,\tau) \,\equiv\;
\sum_{p=1}^{K-1} \lambda_p^{(N)}\lambda_{p+1}^{(N)}\;\big(\qp(\sigma,\tau)-y_p\big)\,\big(\qpp(\sigma,\tau)-y_{p+1}\big) \;.
\ee
\end{proposition}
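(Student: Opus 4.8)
The plan is to run a Guerra-type interpolation between the full DBM (at $t=1$) and a fully decoupled single-site system (at $t=0$), exactly in the spirit of the proof of Theorem \ref{maint}, but now comparing against independent one-body fields rather than against SK models. Concretely, I would introduce i.i.d. standard Gaussians $\tilde z_i^{(p)}$, one per site, and define the interpolating Hamiltonian
\be
\mathcal H_N(\sigma,t) \,\equiv\, \sqrt{t}\,H_{\Lambda_N}(\sigma) \,+\, \sqrt{1-t}\, \sum_{p=1}^K \sqrt{2\big(\lambda_{p-1}^{(N)}y_{p-1}+\lambda_{p+1}^{(N)}y_{p+1}\big)}\; \sum_{i\in L_p} \tilde z_i^{(p)}\,\sigma_i \;,
\ee
keeping the external-field term $\sum_p\sum_{i\in L_p}h_i^{(p)}\sigma_i$ inside the Boltzmann weight and setting $\varphi_N(t)\equiv\frac1N\E\log\mathcal Z_N(t)$. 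The variances of the one-body fields are chosen precisely so that the factorized model at $t=0$ reproduces the log-cosh part of the RS functional \eqref{PRS}: the spins decouple and
\be
\varphi_N(0) \,=\, \sum_{p=1}^K \lambda_p^{(N)}\,\E_{z,h}\log\cosh\Big(\beta\sqrt2\,\sqrt{\lambda_{p-1}^{(N)}y_{p-1}+\lambda_{p+1}^{(N)}y_{p+1}}\;z+h^{(p)}\Big) \,+\, \log2 \;,
\ee
which is exactly $\mathcal P^{RS}_{\Lambda_N}(y,\beta,h)$ stripped of its quadratic constant $\beta^2\sum_{p=1}^{K-1}\lambda_p^{(N)}\lambda_{p+1}^{(N)}(1-y_p)(1-y_{p+1})$, while $\varphi_N(1)=p^{DBM}_{\Lambda_N}(\beta,h)$. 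The Gibbs average $\langle\,\cdot\,\rangle_{N,t}$ in the statement is the one attached to this $\mathcal H_N(\cdot,t)$, defined as in \eqref{qexp}.

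Next I would differentiate $\varphi_N$ and integrate by parts in the Gaussian disorder (Wick's theorem), precisely as for \eqref{deriv_ann}. The two covariances needed are those of the two independent processes. For the DBM one has $\frac1N\E H_{\Lambda_N}(\sigma)H_{\Lambda_N}(\tau)=2\sum_{p=1}^{K-1}\lambda_p^{(N)}\lambda_{p+1}^{(N)}\,\qp\,\qpp$ from \eqref{eq:cov H}; for the one-body process a direct computation gives $\frac1N\E(\cdots)(\sigma)(\cdots)(\tau)=2\sum_{p=1}^K\lambda_p^{(N)}\big(\lambda_{p-1}^{(N)}y_{p-1}+\lambda_{p+1}^{(N)}y_{p+1}\big)\,\qp$, which after the shifts $p\mapsto p\pm1$ symmetrizes into $2\sum_{p=1}^{K-1}\lambda_p^{(N)}\lambda_{p+1}^{(N)}(y_p\,\qpp+y_{p+1}\,\qp)$. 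Taking the diagonal value ($\sigma=\tau$, all overlaps equal to $1$) minus the off-diagonal two-replica Gibbs average yields
\be
\frac{d\varphi_N}{dt} \,=\, \beta^2\!\sum_{p=1}^{K-1}\!\lambda_p^{(N)}\lambda_{p+1}^{(N)}(1-y_p-y_{p+1}) \,-\, \beta^2\Big\langle \sum_{p=1}^{K-1}\lambda_p^{(N)}\lambda_{p+1}^{(N)}\big(\qp\,\qpp-y_p\,\qpp-y_{p+1}\,\qp\big)\Big\rangle_{N,t} \;.
\ee

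The final step is the fundamental theorem of calculus, $p^{DBM}_{\Lambda_N}=\varphi_N(0)+\int_0^1\varphi_N'(t)\,dt$, followed by a short bookkeeping of constants. Adding the $t$-independent part of $\varphi_N'$ to $\varphi_N(0)$ and comparing with $\mathcal P^{RS}_{\Lambda_N}$, the elementary identity $(1-y_p-y_{p+1})-(1-y_p)(1-y_{p+1})=-y_py_{p+1}$ gives $\varphi_N(0)+\beta^2\sum\lambda_p^{(N)}\lambda_{p+1}^{(N)}(1-y_p-y_{p+1})=\mathcal P^{RS}_{\Lambda_N}-\beta^2\sum\lambda_p^{(N)}\lambda_{p+1}^{(N)}y_py_{p+1}$. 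On the other hand, expanding the product in definition \eqref{Qn_rs} shows $\tilde Q_N=\sum\lambda_p^{(N)}\lambda_{p+1}^{(N)}\big(\qp\,\qpp-y_p\,\qpp-y_{p+1}\,\qp\big)+\sum\lambda_p^{(N)}\lambda_{p+1}^{(N)}y_py_{p+1}$, so the leftover deterministic constant $-\beta^2\sum\lambda_p^{(N)}\lambda_{p+1}^{(N)}y_py_{p+1}$ is exactly the constant piece carried inside $\tilde Q_N$; combining it with the integrated off-diagonal term reconstructs $-\beta^2\int_0^1\langle\tilde Q_N\rangle_{N,t}\,dt$, and the claimed identity \eqref{ftc} follows.

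I expect the only delicate point to be the covariance bookkeeping for the one-body process: getting the re-indexing $\sum_p\lambda_p^{(N)}\big(\lambda_{p-1}^{(N)}y_{p-1}+\lambda_{p+1}^{(N)}y_{p+1}\big)\qp=\sum_{p=1}^{K-1}\lambda_p^{(N)}\lambda_{p+1}^{(N)}(y_p\,\qpp+y_{p+1}\,\qp)$ right (this is where the boundary convention $y_0=y_{K+1}=\lambda_0^{(N)}=\lambda_{K+1}^{(N)}=0$ is used), and then tracking the two $\sum y_py_{p+1}$ contributions — one produced by the diagonal constant, one needed to complete the shifted overlap product — so that the explicit constant is precisely absorbed into $\tilde Q_N$ under the integral. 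The integration-by-parts and the evaluation of the boundary values are otherwise entirely routine.
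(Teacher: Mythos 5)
Your proposal is correct and follows essentially the same route as the paper: the same interpolating Hamiltonian \eqref{Hinter_rs} with one-body Gaussian fields of variance $2(\lambda_{p-1}^{(N)}y_{p-1}+\lambda_{p+1}^{(N)}y_{p+1})$, the same endpoint evaluations, and the same Gaussian integration by parts. Your derivative formula differs from the paper's \eqref{deriv_rs} only by the mutually cancelling constant $\beta^2\sum_p\lambda_p^{(N)}\lambda_{p+1}^{(N)}y_p y_{p+1}$ that you track explicitly and the paper absorbs into $(1-y_p)(1-y_{p+1})$ and $\tilde Q_N$ from the start, so the two computations are identical.
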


\begin{proof}
For every $p=1,\dots, K$ we consider a one-body model over the $N_p$ spin variables indexed by the layer $L_p$, at inverse temperature $\beta\,\big(\lambda_{p-1}^{(N)}\,y_{p-1}+\lambda_{p+1}^{(N)}\,y_{p+1}\big)\,$ and random external field distributed as $h_p$.
For $\sigma\in\{-1,1\}^N$ and $t\in[0,1]$ we define an interpolating Hamiltonian as follows:
\be \label{Hinter_rs}
\mathcal H_N(\sigma,t) \,\equiv\,
\sqrt{t}\; H_{\Lambda_N}(\sigma) \,+\,
\sum_{p=1}^K \sum_{i\in L_p} \bigg( \sqrt{1-t}\; \sqrt{2}\,\sqrt{\lambda_{p-1}^{(N)}\,y_{p-1} \,+\, \lambda_{p+1}^{(N)}\,y_{p+1}}\;z_i^{(p)}\,+\,h^{(p)}_i \bigg)\,\sigma_i
\ee
where $z_i^{(p)}$, $i\in L_p$, $p=1,\dots, K$ are independent standard Gaussian random variables, independent also of $H_{\Lambda_N}$ defined in \eqref{eq:H}.
For $t\in(0,1)$ we introduce the interpolating pressure $\varphi_N(t)$ as
\be
\varphi_N(t) \,\equiv\, \frac{1}{N}\,\E\,\log\,\sum_{\sigma} \exp\big(\! -\beta\, \mathcal H_{N}(\sigma,t) \,\big) \ .
\ee
Observe that the quenched pressure of the DBM and a convex combination of quenched pressures of one-body models are recovered at the endpoints of $[0,1]\,$:
\begin{align} \label{t1_rs}
& \varphi_{N}(1) \,=\, p^{DBM}_{\Lambda_N}(\beta,h) \;,\\
\label{t0_rs}
& \varphi_{N}(0) \,=\,
\log 2 \,+\, \sum_{p=1}^K \lambda_p^{(N)}\; \E_{z,h}\log\cosh\bigg(\beta\,\sqrt{2}\,\sqrt{\lambda_{p-1}^{(N)}\,y_{p-1}+\lambda_{p+1}^{(N)}\,y_{p+1}}\;z+h_p\bigg) \;.
\end{align}
Gaussian integration by parts leads to the following result:
\be\label{deriv_rs}
\frac{d\phi_N(t)}{dt} \,=\, \beta^2\, \sum_{p=1}^{K-1}\lambda_p^{(N)}\lambda_{p+1}^{(N)}\;(1-y_p)\,(1-y_{p+1})
\,-\, \beta^2\, \Big\langle Q_N \Big\rangle_{N,t}
\ee
where ${\tilde Q}_N={\tilde Q}_N(\sigma,\tau)$ has been defined in \eqref{Qn_rs} and $\langle\,\cdot\,\rangle_{N,t}$ denotes the quenched Gibbs expectation associated to the Hamiltonian $\mathcal H_N(\sigma,t)+\mathcal H_N(\tau,t)$.

Therefore \eqref{ftc} follows by \eqref{t1_rs}, \eqref{t0_rs}, \eqref{deriv_rs}.
\end{proof}

\begin{remark}
Informally we say that the DBM is in the replica symmetric regime when there exists a stationary point $y^*$ of $\mathcal{P}^{RS}(y)$ such that $\int_0^1 \langle {\tilde Q}_N \rangle_{N,t}\,dt\,$ vanishes in the thermodynamic limit.
Unfortunately due to the lack of convexity in the structure of the remainder ${\tilde Q}_N$ it is not immediate to see what should be the right extremization procedure for $\mathcal{P}^{RS}(y)$.
\end{remark}

Stationary points of $\mathcal P^{RS}(y)$ satisfy the following system of self-consistent equations:
\be\label{PRScon}
y_p \,=\, \E_z\,\tanh^2 \Big(\beta\,\sqrt{2}\,\sqrt{\lambda_{p-1}y_{p-1}+\lambda_{p+1}y_{p+1}}\;z \,+\, h_p\Big)
\qquad \forall\,p=1,\dots,K\; .
\ee
From now on we assume zero external field, namely $h_p\equiv0$ for every $p=1,\dots,K\,$.
Observe that $y=0$ is a solution of \eqref{PRScon} and at this stationary point the replica symmetric functional equals the annealed pressure of the DBM (already computed in the r.h.s. of \eqref{annDBM}):
\be
\mathcal{P}^{RS}(y=0,\beta,h=0,\lambda) \,=\, \log 2 \,+\, \beta^2 \sum_{p=1}^{K-1} \lambda_p\lambda_{p+1} \;.
\ee
We are interested in the conditions on $\beta,\lambda$ that make the \textit{annealed solution} $y=0$ a stable solution of the fixed point equation \eqref{PRScon}.
It is convenient to write \eqref{PRScon} as $y=F(y)\,$, where the function $F:\R^K\to\R^K$, $F=(F_p)_{p=1,\dots,K}$ is defined by
\be
F_p(y) \,\equiv\, \E_z\tanh^2\Big(\beta\,\sqrt{2}\,\sqrt{\lambda_{p-1}y_{p-1}+\lambda_{p+1}y_{p+1}}\;z \Big) \;.
\ee
Let $\mathcal{J}_F(y) \equiv \left(\frac{\partial F_p}{\partial y_{p'}}\right)_{p,p'=1\ldots K}$ be the Jacobian matrix of $F$ at point $y\,$.
$y=0$ is a stable solution of \eqref{PRScon} if the spectral radius $\rho(\mathcal{J}_F(0))<1$, namely if all the eigenvalues of $\mathcal{J}_F(0)$ have absolute value smaller than $1$.
Gaussian integration by parts allows to compute the Jacobian matrix at $y=0$:
\be
\dfrac{\partial F_p}{\partial y_{p'}}\Big|_{y=0} \,=\,
2\,\beta^2\, \lambda_p\, (\delta_{p-1,p'}+\delta_{p+1,p'}) \;,
\ee
we denote its characteristic polynomial by
\be
\Delta_K(x) \,\equiv\, \det \big( x I - \mathcal{J}_F(0) \big) \;.
\ee
Now we confine our investigation to the cases $K=2,3,4$, as in Section \ref{sec:ann}. We have:
\begin{align}
&\Delta_2(x) \,=\, x^2 - 4\beta^4\, \lambda_1\lambda_2 \;,\\
&\Delta_3(x) \,=\, x^3 - 4\beta^4\, x\,(\lambda_1\lambda_2+\lambda_2\lambda_3) \;,\\
&\Delta_4(x) \,=\, x^4 - 4\beta^4\, x^2\,(\lambda_1\lambda_2+\lambda_2\lambda_3+\lambda_3\lambda_4) + 16\beta^8\, \lambda_1\lambda_2\lambda_3\lambda_4 \;.
\end{align}

Standard computations show the following
\newline

\begin{proposition} \label{stabilityprop}
Consider a DBM with $K=2,3,4$ layers and assume $h=0$.
The region of parameters $(\beta,\lambda)$ such that the annealed solution $y=0$ is a stable solution of the replica symmetric consistency equations \eqref{PRScon}
coincide with the interior of the region $A_K$ introduced in Section \ref{sec:ann}. Precisely:
\be\label{stabilty}
\rho (\mathcal J_F(0))<1\ \Leftrightarrow\ 4\beta^4<\phi_K(\lambda) \;,
\ee
where $\phi_K(\lambda)$ is defined by \eqref{phi2},\eqref{phi3},\eqref{phi4}.
\end{proposition}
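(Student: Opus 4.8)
The plan is to identify the spectral radius $\rho(\mathcal{J}_F(0))$ with the largest modulus among the roots of $\Delta_K$ and then to impose that this quantity be strictly less than $1$, matching the outcome case by case with $4\beta^4<\phi_K(\lambda)$. First I would note that, although $\mathcal{J}_F(0)$ is not symmetric, it is tridiagonal with vanishing diagonal, and the product of the two mirror off-diagonal entries in positions $(p,p+1)$ and $(p+1,p)$ equals $4\beta^4\lambda_p\lambda_{p+1}>0$. A diagonal similarity then conjugates it to a symmetric tridiagonal matrix, so all its eigenvalues are real and the nonzero ones occur in $\pm$ pairs; hence $\rho(\mathcal{J}_F(0))$ is the largest modulus of a real root of $\Delta_K$. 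Throughout I set $t\equiv 4\beta^4$, as in the proof of Proposition~\ref{explicit}.

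For $K=2,3$ the computation is immediate. Since $\Delta_2(x)=x^2-t\,\lambda_1\lambda_2$ has roots $\pm\sqrt{t\,\lambda_1\lambda_2}$ and $\Delta_3(x)=x\big(x^2-t(\lambda_1\lambda_2+\lambda_2\lambda_3)\big)$ has roots $0$ and $\pm\sqrt{t(\lambda_1\lambda_2+\lambda_2\lambda_3)}$, the condition $\rho(\mathcal{J}_F(0))<1$ becomes $t\,\lambda_1\lambda_2<1$ and $t(\lambda_1\lambda_2+\lambda_2\lambda_3)<1$ respectively, i.e. exactly $4\beta^4<\phi_2(\lambda)$ and $4\beta^4<\phi_3(\lambda)$.

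The genuine work is the case $K=4$, where $\Delta_4$ is biquadratic. Writing $S\equiv\lambda_1\lambda_2+\lambda_2\lambda_3+\lambda_3\lambda_4$ and $P\equiv\lambda_1\lambda_2\lambda_3\lambda_4$ and substituting $u=x^2$, the eigenvalues squared are the roots $u_-\le u_+$ of $g(u)\equiv u^2-tS\,u+t^2P$; these are real and nonnegative since their sum $tS$ and product $t^2P$ are nonnegative, so $\rho(\mathcal{J}_F(0))<1$ is equivalent to $u_+<1$. As $g$ is an upward parabola with axis at $u=tS/2\le u_+$, the condition $u_+<1$ holds precisely when $g(1)>0$ together with $tS/2<1$. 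The decisive remark is that $g(1)=t^2P-tS+1$ is, read as a polynomial in $t$, exactly the one defining $\phi_4$ in \eqref{phi4}, with roots $t_-\le t_+$; since $P>0$ one has $g(1)>0\iff t<t_-$ or $t>t_+$.

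It therefore remains to locate the vertex threshold $2/S$ relative to $t_\pm$, which I expect to be the main (though mild) obstacle. Setting $a\equiv\lambda_1\lambda_2$, $b\equiv\lambda_2\lambda_3$, $c\equiv\lambda_3\lambda_4$, one checks $P=ac$ and $S=a+b+c$, so $4P\le S^2$ follows from $4ac\le(a+c)^2\le(a+b+c)^2$. Evaluating $t\mapsto t^2P-tS+1$ at $t=2/S$ gives $4P/S^2-1\le 0$, which places $2/S$ between the two roots, i.e. $t_-\le 2/S\le t_+$. Intersecting the two conditions then collapses $\{g(1)>0\}\cap\{t<2/S\}$ to $\{t<t_-\}$, yielding $u_+<1\iff t<t_-=\phi_4(\lambda)$, that is $4\beta^4<\phi_4(\lambda)$. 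Degenerate configurations with $P=0$ (some $\lambda_i=0$) lower the degree of the biquadratic and are handled directly or by continuity. This establishes $\rho(\mathcal{J}_F(0))<1\iff 4\beta^4<\phi_K(\lambda)$, which by Proposition~\ref{explicit} is precisely the interior of $A_K$.
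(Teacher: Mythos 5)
Your proof is correct and follows exactly the route the paper intends: the paper lists the characteristic polynomials $\Delta_K$ and then dismisses the rest as ``standard computations'', which you carry out in full --- the only nontrivial step being the $K=4$ case, where your verification that $2/S$ lies between the roots $t_\pm$ of $1-tS+t^2P$ (via $4P\le S^2$) correctly collapses the two conditions $g(1)>0$ and $tS/2<1$ to $t<t_-=\phi_4(\lambda)$. The degenerate case $P=0$ is also handled consistently with \eqref{phi4}, so nothing is missing.
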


\section{Conclusions}

While much theoretical work on the processes of learning and retrieving information in shallow neural network has been produced along the past decades, deep neural networks still escape this formalization. As the analysis of archetypal -despite quite atypical- models always played as a useful rudimentary guide, in a quest for a comprehension of neural networks, the random-weight theory (i.e. the natural setting for the statistical mechanics of disordered systems) has provided to be fundamental since the celebrated AGS theory.

With this this perspective in mind in this paper we studied, through the statistical mechanics of disordered systems, the properties of the quenched free energy of a Deep Boltzmann Machine (DBM).
The control (tunable) parameters for this model are the inverse temperature $\beta$ and the collection of the form factors $\lambda$ (i.e. the relative ratios among adjacent layers) while the order parameters are the overlaps within each layer.
We identified, in the control parameters space, a region where the quenched pressure density in the thermodynamic limit coincides with its annealed expression. A side result is the existence of the infinite volume limit for the pressure in the parameters regions that we have identified.

Inspired by the connection between the disorder-to-order transition in statistical mechanics of disordered systems and the detectability-undetectability transition in machine learning, we confined the annealed region in a space as narrow as possible. Such condition of extremality results in constraints relating noise and form factors: a collection of optimal lambdas and, remarkably, for $K=4$, the need for a small extremal layer (i.e. the size of last layer has to grow sub-linearly with respect to the total network size). We speculate this condition to be somehow expected and welcomed since learning tasks typically require information compressing. We plan to analyse networks of arbitrary depth in future works.

\begin{acknowledgements}
The authors are grateful to Francesco Guerra, Alina S\^irbu, Daniele Tantari for useful conversations.
A.B. was partially supported by MIUR via
Rete Match - Progetto Pythagoras (CUP:J48C17000250006) and by INFN and Unisalento.
P.C. was partially supported by PRIN project Statistical Mechanics and Complexity (2015K7KK8L). D.A. and E.M.
were partially supported by Progetto Almaidea 2018,

\end{acknowledgements}

\end{document}